\documentclass[aps,pra,twocolumn,superscriptaddress,nofootinbib,floatfix]{revtex4-2}

\usepackage[T1]{fontenc}
\usepackage{lmodern}
\usepackage{amsmath,amssymb,mathtools,bm}
\usepackage{amsthm}
\usepackage{booktabs,array}
\usepackage{xcolor}
\usepackage{tikz}
\usepackage{pgfplots}
\pgfplotsset{compat=1.18}
\usepackage{hyperref}
\hypersetup{
  colorlinks=true,
  citecolor=blue,
  linkcolor=blue,
  urlcolor=blue,
  pdftitle={Coupling Structure Determines Communication Regimes in Discrete Quantum Convolutional Channels},
  pdfauthor={Chunhe Xiong}
}

\newtheorem{theorem}{Theorem}
\newtheorem{proposition}[theorem]{Proposition}
\newtheorem{lemma}[theorem]{Lemma}

\newtheorem{remark}[theorem]{Remark}

\newcommand{\Fd}{\mathbb F_d}
\newcommand{\cH}{\mathcal H}
\newcommand{\States}{\mathsf D}
\newcommand{\id}{\operatorname{id}}
\newcommand{\Tr}{\operatorname{Tr}}
\newcommand{\Smin}{S_{\min}}
\newcommand{\Sreg}{\overline S_{\min}}
\newcommand{\CZ}{C_{\mathrm{rel}}^{Z}}
\newcommand{\CF}{C_{\mathrm{rel}}^{F}}
\newcommand{\ket}[1]{\lvert #1\rangle}
\newcommand{\bra}[1]{\langle #1\rvert}
\newcommand{\proj}[1]{\lvert #1\rangle\!\langle #1\rvert}

\begin{document}

\title{Coupling Structure Determines Communication Regimes in Discrete Quantum Convolutional Channels}

\author{Chunhe Xiong}
\email{xiongchunhe@csu.edu.cn}
\affiliation{School of Mathematics and Statistics, Central South University, Changsha 410083, China}


\begin{abstract}

We study the recently developed theory of quantum convolution
for discrete-variable quantum systems. We classify the discrete
quantum convolutional channels generated by an invertible
\(2\times2\) coupling matrix and a fixed environmental state
according to which entries of the coupling matrix vanish.
Under role-preserving basis relabelings, matrices with all four
entries nonzero form \(d-2\) canonical cross-ratio classes,
whereas matrices with exactly one vanishing entry split into
four inequivalent branches. We determine the optimized one-shot
Holevo information and the classical, quantum, and private
capacities for all four one-entry-vanishing branches. If a
diagonal entry vanishes, the channel is an entanglement-breaking
measure-and-prepare orbit channel: its classical capacity equals
a relative-entropy coherence of the environmental state, whereas
its quantum and private capacities vanish. If an off-diagonal
entry vanishes, the channel is a degradable generalized-dephasing
channel: one complete orthonormal basis is transmitted without
error, whereas the quantum and private capacities are determined
by the entropy deficit of the environmental state after
dephasing in the conjugate basis. Thus, the position of a single
vanishing entry selects two qualitatively different communication
regimes. When all four entries are nonzero, the channels are
irreducibly Weyl covariant, reducing the classical-capacity
problem to a regularized minimum-output-entropy problem. For the
unique fully nonzero coupling class in the single-qutrit case,
we further derive an exact Holevo formula for a depolarized
one-parameter family and identify its entanglement-breaking
threshold.

\end{abstract}

\maketitle

\section{Introduction}

The classical, quantum, and private capacities of a quantum channel quantify the reliable rates for transmitting classical information, quantum information, and secret classical information, respectively. The classical capacity is the regularization of the optimized one-shot Holevo information \cite{Holevo1998,SchumacherWestmoreland1997}; the quantum and private capacities are regularizations of coherent-information and private-information quantities \cite{Lloyd1997,Devetak2005,DevetakShor2005}. Their exact evaluation is difficult because additivity can fail even for channels with simple descriptions \cite{Hastings2009,Leditzky2018}. Structural classifications are therefore useful when they convert regularized optimizations into explicit formulas.

Discrete quantum convolution provides a finite-dimensional setting in which a system input and a fixed environmental state are mixed by an invertible linear transformation over a finite field \cite{BuGuJaffePNAS,BuGuJaffeDQG}. The resulting channels depend on two independent ingredients: the environmental state and the algebraic pattern of the coupling coefficients. Previous work concentrated mainly on couplings whose four entries are nonzero and related their communication properties to minimum output entropy, stabilizer structure, and environmental nonstabilizerness \cite{XiongMean2026,BuJaffe2025}. Here we show that removing a single coupling term does not merely produce a perturbative limit of that sector. Its position selects one of two qualitatively different channel structures: an entanglement-breaking orbit channel or a degradable generalized-dephasing channel.

We classify invertible couplings in the nontrivial sector consisting of fully nonzero matrices and matrices with exactly one vanishing entry. Following the convolution literature, we also call a fully nonzero coupling \emph{fully positive}; no order structure on the finite field is intended. The equivalence relation permits multiplicative relabelings of the system input and both outputs while leaving the environment input, and hence its state, fixed. Fully positive matrices reduce to $d-2$ canonical classes labeled by a cross-ratio. The one-entry-vanishing sector consists of four inequivalent branches, distinguished by the position of the zero.

For a vanishing diagonal entry, the channel measures the input in the computational or Fourier basis and prepares a complete shift or phase orbit of the environmental state. Such channels are entanglement breaking for every environment. Their optimized Holevo information and classical capacity equal the relative-entropy coherence of the environment in the conjugate basis, while their quantum and private capacities vanish. For a vanishing off-diagonal entry, one complete basis is transmitted without error. The complementary channel records only the corresponding classical basis label, placing the channel in the generalized-dephasing class and making it degradable \cite{HolevoComplementary,CubittRuskaiSmith}. Concavity of coherent information together with Weyl symmetry then identifies the maximally mixed state as optimal and gives exact single-letter formulas for both the quantum and private capacities. Thus the pattern of the missing coupling coefficient determines the induced channel class and all three capacities throughout the one-entry-vanishing sector.

The fully positive sector has a different structure. Weyl covariance reduces the optimized Holevo information to minimum output entropy and the regularized classical capacity to regularized minimum output entropy, but basis coherence alone does not determine either quantity. As an application, we study a depolarized single-qutrit slice. Its pure endpoint belongs to the balanced nonstabilizer family analyzed in Ref.~\cite{XiongMean2026}; the noisy interpolation considered here yields an exact all-parameter Holevo formula and an entanglement-breaking-to-NPT transition that are not part of that earlier analysis.

The remaining invertible matrices with two zero entries are diagonal or anti-diagonal. They induce, respectively, a unitary channel and a replacer channel, and are omitted as elementary endpoints. Section~\ref{sec:framework} introduces the model and proves the normal-form classification. Section~\ref{sec:onezero} determines all three capacities for the four one-entry-vanishing branches. Section~\ref{sec:fullpositive} treats the fully positive sector, and Sec.~\ref{sec:qutrit} analyzes the qutrit slice. Section~\ref{sec:discussion} concludes with the resulting capacity sectors and open additivity questions.

\section{Framework and coupling classification}
\label{sec:framework}

Throughout, $d$ is an odd prime, $V=\Fd^n$, and $D=|V|=d^n$. Let $\cH=(\mathbb C^d)^{\otimes n}$ with computational basis $\{\ket{x}:x\in V\}$. All logarithms are base two. For $x,p\in V$, define
\begin{equation*}
X^x\ket{y}=\ket{y+x},\qquad Z^p\ket{y}=\omega^{p\cdot y}\ket{y}.
\end{equation*}
Here $\omega=e^{2\pi i/d}$. Our Fourier convention is
\begin{equation}
\ket{\widetilde p}=\frac{1}{\sqrt D}\sum_{y\in V}\omega^{-p\cdot y}\ket{y},
\qquad X^x\ket{\widetilde p}=\omega^{p\cdot x}\ket{\widetilde p}.
\label{eq:fourier}
\end{equation}
The computational- and Fourier-basis dephasing maps are
\begin{align*}
\Delta_Z(\rho)&=\sum_{x\in V}\proj{x}\rho\proj{x},\\
\Delta_F(\rho)&=\sum_{p\in V}\proj{\widetilde p}\rho\proj{\widetilde p}.
\end{align*}
We use the relative-entropy coherences \cite{Baumgratz2014}
\begin{equation*}
\CZ(\rho)=S(\Delta_Z\rho)-S(\rho),
\CF(\rho)=S(\Delta_F\rho)-S(\rho).
\end{equation*}

Let
\begin{equation*}
M=\begin{pmatrix}a&b\\c&e\end{pmatrix}\in \mathrm{GL}(2,\Fd)
\end{equation*}
act componentwise on $V\oplus V$ through
\begin{equation}
U_M\ket{x,y}=\ket{ax+by,cx+ey}.
\label{eq:UM}
\end{equation}
For a fixed environmental state $\sigma\in\States(\cH)$, define
\begin{equation}
\Lambda_{M,\sigma}(\rho)=\Tr_2\!\left[U_M(\rho\otimes\sigma)U_M^\dagger\right].
\label{eq:channel}
\end{equation}
For $\alpha\in\Fd^\times$, let $P_\alpha\ket{x}=\ket{\alpha x}$.

For an ensemble $\mathcal E=\{p_j,\omega_j\}$, write
\begin{equation*}
\chi(\mathcal E):=S\!\left(\sum_jp_j\omega_j\right)-\sum_jp_jS(\omega_j).
\end{equation*}
For a channel $\Lambda$ and a complementary channel $\Lambda^c$, the optimized one-shot Holevo information and classical capacity are
\begin{align*}
\chi(\Lambda)&:=\max_{\{p_j,\rho_j\}}\chi\!\left(\{p_j,\Lambda(\rho_j)\}\right),\\
C(\Lambda)&:=\lim_{m\to\infty}\frac1m\chi(\Lambda^{\otimes m}).
\end{align*}
The coherent information and quantum capacity are
\begin{align*}
I_c(\rho,\Lambda)&:=S(\Lambda(\rho))-S(\Lambda^c(\rho)),\\
Q(\Lambda)&:=\lim_{m\to\infty}\frac1m
\max_{\rho_m}I_c(\rho_m,\Lambda^{\otimes m}).
\end{align*}
Finally, define the one-shot private information and private capacity by
\begin{align*}
P^{(1)}(\Lambda)&:=\max_{\{p_j,\rho_j\}}
\Bigl[\chi\!\left(\{p_j,\Lambda(\rho_j)\}\right)
-\chi\!\left(\{p_j,\Lambda^c(\rho_j)\}\right)\Bigr],\\
P(\Lambda)&:=\lim_{m\to\infty}\frac1mP^{(1)}(\Lambda^{\otimes m}).
\end{align*}
These are the standard regularized capacity formulas \cite{Holevo1998,SchumacherWestmoreland1997,Lloyd1997,Devetak2005,DevetakShor2005}.

\begin{theorem}[Normal-form classification]
\label{thm:classification}
Let $M\in\mathrm{GL}(2,\Fd)$ have at most one vanishing entry. Two coupling matrices are \emph{role-preserving multiplicatively equivalent} if
\begin{equation}
M'=\begin{pmatrix}\mu&0\\0&\nu\end{pmatrix}
M\begin{pmatrix}\alpha&0\\0&1\end{pmatrix},
\qquad \mu,\nu,\alpha\in\Fd^\times.
\label{eq:equiv}
\end{equation}
For fixed $\sigma$, equivalent matrices induce channels related by input and output permutation unitaries,
\begin{equation}
\Lambda_{M',\sigma}(\rho)
=P_\mu\Lambda_{M,\sigma}(P_\alpha\rho P_\alpha^\dagger)P_\mu^\dagger,
\label{eq:channel-equiv}
\end{equation}
and therefore have identical optimized Holevo information and identical classical, quantum, and private capacities.

If all four entries are nonzero, $M$ is equivalent to the unique normal form
\begin{equation}
M_r=\begin{pmatrix}r&1\\1&1\end{pmatrix},
\qquad r=\frac{ae}{bc}\in\Fd^\times\setminus\{1\}.
\label{eq:Mr}
\end{equation}
Hence the fully positive sector consists of $d-2$ equivalence classes.

If exactly one entry vanishes, $M$ is equivalent to exactly one of
\begin{equation}
\begin{pmatrix}0&1\\1&1\end{pmatrix},\quad
\begin{pmatrix}1&1\\1&0\end{pmatrix},\quad
\begin{pmatrix}1&0\\1&1\end{pmatrix},\quad
\begin{pmatrix}1&1\\0&1\end{pmatrix}.
\label{eq:fourforms}
\end{equation}
These four normal forms are inequivalent under Eq.~\eqref{eq:equiv}.
\end{theorem}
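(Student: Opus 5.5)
The proof has three parts: the unitary intertwining relation \eqref{eq:channel-equiv}, the cross-ratio normal form for fully nonzero matrices, and the four branches with one vanishing entry.

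\textbf{Intertwining relation.} Write $D_\mu=\mathrm{diag}(\mu,\nu)$ and $A_\alpha=\mathrm{diag}(\alpha,1)$, so that $M'=D_\mu M A_\alpha$. Because $U_M$ is the permutation $\ket{x,y}\mapsto\ket{M(x,y)}$, it is multiplicative in $M$. Hence
\[
U_{M'}=(P_\mu\otimes P_\nu)\,U_M\,(P_\alpha\otimes \id).
\]
This can be checked directly on basis vectors: $\ket{x,y}\mapsto\ket{\alpha x,y}\mapsto\ket{a\alpha x+by,\,c\alpha x+ey}\mapsto\ket{\mu(a\alpha x+by),\,\nu(c\alpha x+ey)}$, which is exactly $U_{M'}\ket{x,y}$.

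The environment factor is left untouched, so $\sigma$ is unchanged. The output unitary $P_\nu$ acts on the traced-out factor, so it drops out by cyclicity of the partial trace. This gives \eqref{eq:channel-equiv}.

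It then remains to note that each capacity quantity is invariant under pre- and post-composition with unitaries. For the Holevo quantity this follows by relabeling the ensemble and using unitary invariance of entropy. For the complementary channels, the complement of $\mathcal U\circ\Lambda\circ\mathcal V$ equals $\Lambda^c\circ\mathcal V$ up to an isometry. Tensor powers inherit the same structure, so the regularized capacities are also invariant.

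\textbf{Fully nonzero sector.} First I will check that $r=ae/(bc)$ is an invariant. Under the action, $a\mapsto\mu\alpha a$, $b\mapsto\mu b$, $c\mapsto\nu\alpha c$, and $e\mapsto\nu e$. Then $ae\mapsto\mu\nu\alpha\,ae$ and $bc\mapsto\mu\nu\alpha\,bc$, so the ratio is preserved.

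For existence of the normal form, choose $\mu=b^{-1}$ and $\nu=e^{-1}$, which sets both second-column entries to $1$. Then choose $\alpha=e/c$, which makes the $(2,1)$ entry equal to $1$. The $(1,1)$ entry becomes $(a/b)(e/c)=r$.

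Invertibility gives $ae\neq bc$, i.e.\ $r\neq1$. Conversely, every $r\in\Fd^\times\setminus\{1\}$ is realized by $M_r$, whose determinant is $r-1\neq0$. Since $r$ is invariant, two normal forms are equivalent only if they have the same $r$, so the normal form is unique. This yields exactly $d-2$ classes.

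\textbf{One vanishing entry.} The action scales every entry by a nonzero factor, so the position of the zero is invariant. This immediately shows that the four forms in \eqref{eq:fourforms} are pairwise inequivalent.

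For existence, I treat each zero position separately, each time solving for $\mu,\nu,\alpha$ to make the three nonzero entries equal to $1$. Since $\mu$ scales row one, $\nu$ scales row two, and $\alpha$ scales column one, this amounts to solving for three free scalars.

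\begin{itemize}
\item $a=0$: take $\mu=b^{-1}$ and $\nu=e^{-1}$, then $\alpha=e/c$.
\item $e=0$: take $\mu=b^{-1}$, then $\alpha=b/a$, then $\nu=a/(bc)$.
\item $b=0$: take $\nu=e^{-1}$, then $\alpha=e/c$, then $\mu=c/(ae)$.
\item $c=0$: take $\mu=b^{-1}$, $\nu=e^{-1}$, and $\alpha=b/a$.
\end{itemize}

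In each case the matrix stays invertible, because the single zero kills only one of the two products $ae$, $bc$, leaving the determinant nonzero.

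\textbf{Expected difficulty.} None of the steps is hard. The only real care is needed in the capacity-invariance step: I must verify that the complementary channel changes only by an input unitary and an output isometry. This is what makes both the quantum and the private capacities invariant, not just the Holevo quantity.
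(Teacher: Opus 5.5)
Your proposal is correct and follows the paper's proof essentially step for step: the same factorization $U_{M'}=(P_\mu\otimes P_\nu)U_M(P_\alpha\otimes I)$ with $P_\nu$ dropping out under the partial trace, the same scalings $\mu,\nu,\alpha$ (your choices coincide with the paper's in all five cases), invariance of $ae/(bc)$ for uniqueness, and invariance of the zero position for inequivalence. Your extra remarks, that the complementary channel changes only by the input unitary and that every $r\in\mathbb F_d^\times\setminus\{1\}$ is realized by $M_r$, fill in details the paper leaves implicit.
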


\begin{proof}
Equation~\eqref{eq:equiv} implies
\begin{equation*}
U_{M'}=(P_\mu\otimes P_\nu)U_M(P_\alpha\otimes I).
\end{equation*}
which gives Eq.~\eqref{eq:channel-equiv} after tracing out the second output. Unitary pre- and post-processing preserve the four quantities $\chi$, $C$, $Q$, and $P$.

For a fully positive matrix, choose $\mu=b^{-1}$, $\nu=e^{-1}$, and $\alpha=ec^{-1}$. Then
\begin{equation*}
\begin{pmatrix}b^{-1}&0\\0&e^{-1}\end{pmatrix}
M\begin{pmatrix}ec^{-1}&0\\0&1\end{pmatrix}
=\begin{pmatrix}ae/(bc)&1\\1&1\end{pmatrix}.
\end{equation*}
The ratio $ae/(bc)$ is invariant under Eq.~\eqref{eq:equiv}, proving uniqueness. Invertibility excludes $r=1$, so the $d-1$ nonzero field elements leave $d-2$ admissible classes.

When exactly one entry vanishes, the three nonzero entries can be normalized to one. Explicitly,
\begin{align*}
\begin{pmatrix}b^{-1}&0\\0&e^{-1}\end{pmatrix}
\begin{pmatrix}0&b\\c&e\end{pmatrix}
\begin{pmatrix}ec^{-1}&0\\0&1\end{pmatrix}
&=\begin{pmatrix}0&1\\1&1\end{pmatrix},\\
\begin{pmatrix}b^{-1}&0\\0&a(bc)^{-1}\end{pmatrix}
\begin{pmatrix}a&b\\c&0\end{pmatrix}
\begin{pmatrix}a^{-1}b&0\\0&1\end{pmatrix}
&=\begin{pmatrix}1&1\\1&0\end{pmatrix},\\
\begin{pmatrix}c(ae)^{-1}&0\\0&e^{-1}\end{pmatrix}
\begin{pmatrix}a&0\\c&e\end{pmatrix}
\begin{pmatrix}ec^{-1}&0\\0&1\end{pmatrix}
&=\begin{pmatrix}1&0\\1&1\end{pmatrix},\\
\begin{pmatrix}b^{-1}&0\\0&e^{-1}\end{pmatrix}
\begin{pmatrix}a&b\\0&e\end{pmatrix}
\begin{pmatrix}ba^{-1}&0\\0&1\end{pmatrix}
&=\begin{pmatrix}1&1\\0&1\end{pmatrix}.
\end{align*}
The transformations in Eq.~\eqref{eq:equiv} preserve the position of a zero, so the four normal forms are inequivalent.
\end{proof}

\begin{remark}[Cross-ratio branches and elementary endpoints]
\label{rem:branches}
Within the nontrivial sector of Theorem~\ref{thm:classification}, there are $d+2$ inequivalent classes: $d-2$ fully positive classes and four one-zero classes. The cross-ratio labels the fully positive sector, but the formal values $r=0$ and $r=\infty$ each split into two inequivalent zero-position branches. The omitted invertible two-zero matrices are diagonal or anti-diagonal; they give a unitary channel or a replacer channel, respectively. Thus the pattern of vanishing coupling coefficients, rather than the cross-ratio alone, determines whether the induced channel lies in an entanglement-breaking orbit sector, a degradable generalized-dephasing sector, or the fully positive Weyl-covariant sector.
\end{remark}

\section{Complete capacities with one vanishing entry}
\label{sec:onezero}

\subsection{One vanishing diagonal entry: orbit channels}

The following elementary orbit-channel lemma will be proved in Appendix~\ref{app:orbit}. If
\begin{equation}
\Phi_\tau(\rho)=\sum_{g\in\Gamma}\bra{g}\rho\ket{g}\,U_g\tau U_g^\dagger,
\label{eq:orbit-channel}
\end{equation}
where $g\mapsto U_g$ is a unitary representation of the finite group $\Gamma$, then \cite{Vaccaro2008pra,Gour2009pra,Marvian2016pra}
\begin{equation}
\chi(\Phi_\tau)=C(\Phi_\tau)
=S\!\left(\frac1{|\Gamma|}\sum_gU_g\tau U_g^\dagger\right)-S(\tau),
\label{eq:orbit-formula}
\end{equation}
and $Q(\Phi_\tau)=P(\Phi_\tau)=0$ \cite{ShorEBC,HorodeckiShorRuskai}.

\begin{theorem}[Diagonal-zero capacity formulas]
\label{thm:diagonalzero}
Let $M\in\mathrm{GL}(2,\Fd)$ have exactly one vanishing diagonal entry. By Theorem~\ref{thm:classification}, $M$ is equivalent to exactly one of
\begin{equation*}
M_{e=0}=\begin{pmatrix}1&1\\1&0\end{pmatrix},\qquad
M_{a=0}=\begin{pmatrix}0&1\\1&1\end{pmatrix}.
\end{equation*}
For the first class,
\begin{equation}
\chi=C=\CF(\sigma),\qquad Q=P=0,
\label{eq:e0-cap}
\end{equation}
whereas for the second class,
\begin{equation}
\chi=C=\CZ(\sigma),\qquad Q=P=0.
\label{eq:a0-cap}
\end{equation}
\end{theorem}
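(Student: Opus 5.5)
By Theorem~\ref{thm:classification}, equivalent couplings give channels related by input and output permutation unitaries, which preserve $\chi$, $C$, $Q$ and $P$. So it suffices to treat the two normal forms $M_{e=0}$ and $M_{a=0}$. For each, I will compute $\Lambda_{M,\sigma}$ explicitly and show it has the orbit form \eqref{eq:orbit-channel}. Equation~\eqref{eq:orbit-formula} and the vanishing of $Q$ and $P$ for entanglement-breaking channels then give the result, once I identify the orbit average as a dephasing of $\sigma$.

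\emph{The class $M_{e=0}$.} Here $U_M\ket{x,y}=\ket{x+y,x}$. Expanding $\rho=\sum\rho_{xx'}\ket{x}\bra{x'}$ and $\sigma=\sum\sigma_{yy'}\ket{y}\bra{y'}$, the partial trace over the second output register (which holds $x$) forces $x=x'$. This gives
\begin{equation*}
\Lambda_{M_{e=0},\sigma}(\rho)=\sum_{x\in V}\bra{x}\rho\ket{x}\,X^x\sigma X^{x\dagger}.
\end{equation*}
This is $\Phi_\sigma$ with $\Gamma=V$, measuring in the computational basis, and $U_x=X^x$. The orbit average $D^{-1}\sum_x X^x\sigma X^{x\dagger}$ equals $\Delta_F(\sigma)$. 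To see this, write $\sigma$ in the Fourier basis and use Eq.~\eqref{eq:fourier}: each $X^x$ multiplies $\ket{\widetilde p}\bra{\widetilde q}$ by $\omega^{(p-q)\cdot x}$, and character orthogonality kills the off-diagonal terms. Hence $\chi=C=S(\Delta_F\sigma)-S(\sigma)=\CF(\sigma)$.

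\emph{The class $M_{a=0}$.} Here $U_M\ket{x,y}=\ket{y,x+y}$. Tracing out the second register imposes $x-x'=y'-y$. I will expand this delta as $D^{-1}\sum_{p}\omega^{p\cdot(x-x'+y-y')}$ and separate the $\rho$ and $\sigma$ factors. By the Fourier convention, $\sum_{x,x'}\rho_{xx'}\omega^{p\cdot(x-x')}=D\bra{\widetilde p}\rho\ket{\widetilde p}$. Also $\sum_{y,y'}\sigma_{yy'}\omega^{p\cdot(y-y')}\ket{y}\bra{y'}=Z^p\sigma Z^{p\dagger}$. Together these give
\begin{equation*}
\Lambda_{M_{a=0},\sigma}(\rho)=\sum_{p\in V}\bra{\widetilde p}\rho\ket{\widetilde p}\,Z^p\sigma Z^{p\dagger}.
\end{equation*}
This is an orbit channel measuring in the Fourier basis with representation $p\mapsto Z^p$. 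The $Z$-twirl is $\Delta_Z$, again by character orthogonality. Hence $\chi=C=\CZ(\sigma)$.

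In both cases the channel is measure-and-prepare, hence entanglement breaking, so $Q=P=0$ by the cited results. The orbit lemma is stated for a measurement in the basis $\{\ket{g}\}$. For $M_{a=0}$ that basis is the Fourier basis; relabelling it as the standard basis by a unitary change of input leaves $\chi$ unchanged.

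The main obstacle I expect is purely bookkeeping: getting the Fourier sign conventions right in the $M_{a=0}$ computation, so that the measured basis is genuinely $\{\ket{\widetilde p}\}$ and the prepared states are $Z^p\sigma Z^{p\dagger}$ rather than $Z^{-p}\sigma Z^{-p\dagger}$. A sign flip there would only relabel $p\to-p$ and would not affect the entropies, so I would also note this as a safeguard.
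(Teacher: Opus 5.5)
Your proposal is correct and follows the paper's proof essentially step by step. Like the paper, you compute each normal-form channel explicitly as a shift-orbit or phase-orbit measure-and-prepare channel (using the character-sum expansion of the delta for $M_{a=0}$), identify the twirls with $\Delta_F$ and $\Delta_Z$, and then invoke the orbit lemma together with entanglement breaking for $Q=P=0$; your remark that the Fourier measurement basis must be relabelled by an input unitary before applying the lemma is a small detail the paper leaves implicit.
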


\begin{proof}
Write $\rho=\sum_{x,y}\rho_{xy}\ket{x}\bra{y}$ and $\sigma=\sum_{z,w}\sigma_{zw}\ket{z}\bra{w}$.

For $M_{e=0}$, one has $U_M\ket{x,z}=\ket{x+z,x}$. Tracing the second output forces the two input labels to agree, and therefore
\begin{equation*}
\Lambda_{M,\sigma}(\rho)
=\sum_{x\in V}\rho_{xx}X^x\sigma X^{-x}.
\end{equation*}
This is a computational-basis measurement followed by the full shift orbit of $\sigma$. Since the shift twirl equals $\Delta_F$, Eq.~\eqref{eq:orbit-formula} gives Eq.~\eqref{eq:e0-cap}.

For $M_{a=0}$, $U_M\ket{x,z}=\ket{z,x+z}$. Using
\begin{equation*}
\delta_{v,0}=\frac1D\sum_{m\in V}\omega^{m\cdot v}.
\end{equation*}
we obtain
\begin{align*}
\Lambda_{M,\sigma}(\rho)
&=\sum_{x,y,z,w}\rho_{xy}\sigma_{zw}
\delta_{x-y+z-w,0}\ket{z}\bra{w}\\
&=\frac1D\sum_{m\in V} A_m B_m,
\end{align*}
where
\begin{align*}
A_m&:=\sum_{x,y}\rho_{xy}\omega^{m\cdot(x-y)}
=D\bra{\widetilde m}\rho\ket{\widetilde m},\\
B_m&:=\sum_{z,w}\sigma_{zw}\omega^{m\cdot(z-w)}\ket{z}\bra{w}
=Z^m\sigma Z^{-m}.
\end{align*}
Consequently,
\begin{equation*}
\Lambda_{M,\sigma}(\rho)
=\sum_{m\in V}\bra{\widetilde m}\rho\ket{\widetilde m}\,
Z^m\sigma Z^{-m}.
\end{equation*}
Thus the channel measures the Fourier basis and prepares the full phase orbit. The phase twirl equals $\Delta_Z$, so Eq.~\eqref{eq:orbit-formula} gives Eq.~\eqref{eq:a0-cap}.
\end{proof}

\subsection{One vanishing off-diagonal entry: generalized dephasing}

A vanishing off-diagonal entry preserves one complete input basis. The complementary channel records only the corresponding classical basis probabilities, which makes the original channel degradable and its quantum and private capacities single-letter.

\begin{theorem}[Degradable channels with one vanishing off-diagonal coupling]
\label{thm:offdiag}
Let $M\in\mathrm{GL}(2,\Fd)$ have exactly one vanishing off-diagonal entry. For the representative
\begin{equation*}
M_{b=0}=\begin{pmatrix}1&0\\1&1\end{pmatrix},
\end{equation*}
the channel is degradable and
\begin{equation}
\chi=C=\log D,\qquad
Q=P=\log D-S(\Delta_F\sigma).
\label{eq:b0-cap}
\end{equation}
For the Fourier-dual representative
\begin{equation*}
M_{c=0}=\begin{pmatrix}1&1\\0&1\end{pmatrix},
\end{equation*}
the channel is also degradable and
\begin{equation}
\chi=C=\log D,\qquad
Q=P=\log D-S(\Delta_Z\sigma).
\label{eq:c0-cap}
\end{equation}
\end{theorem}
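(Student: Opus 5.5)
The approach is to compute $\Lambda_{M_{b=0},\sigma}$ explicitly, recognize it as a Schur (Hadamard) multiplier on the computational basis, and then read off degradability and all capacities from that structure. The $M_{c=0}$ case will follow by the same computation in the Fourier basis.

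\emph{Step 1 (explicit channel form).} For $M_{b=0}$ one has $U_M\ket{x,z}=\ket{x,x+z}$. Tracing out the second output imposes $x+z=y+w$, so
\begin{equation*}
\Lambda(\rho)=\sum_{x,y}\rho_{xy}\,g(x-y)\ket{x}\bra{y},\qquad g(v):=\sum_{z}\sigma_{z+v,z}.
\end{equation*}
The sign convention in $g$ is to be fixed during the computation. Thus $\Lambda(\rho)=G\circ\rho$ with the circulant matrix $G_{xy}=g(x-y)$ and $g(0)=1$. In particular, computational-basis states pass unchanged, so $\Delta_Z\circ\Lambda=\Delta_Z$. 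Sending the ensemble $\{D^{-1},\proj{x}\}$ gives $\chi\ge\log D$, and $\log D$ is the trivial upper bound. Additivity is automatic, since $\chi(\Lambda^{\otimes m})\le m\log D$, so $C=\log D$.

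\emph{Step 2 (degradability).} Take the Stinespring dilation $\ket{x}\mapsto U_M(\ket{x}\otimes\ket{\psi_\sigma})$, where $\ket{\psi_\sigma}$ purifies $\sigma$ on a reference system. The environment then consists of the second output together with the reference, and it receives
\begin{equation*}
\Lambda^c(\rho)=\sum_x\rho_{xx}\,(X^x\otimes I)\proj{\psi_\sigma}(X^{-x}\otimes I).
\end{equation*}
The off-diagonal terms vanish because the first output $\ket{x}$ is traced out. Hence $\Lambda^c=\Lambda^c\circ\Delta_Z=(\Lambda^c\circ\Delta_Z)\circ\Lambda$, so the degrading map $\Lambda^c\circ\Delta_Z$ exists. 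For degradable channels coherent information is additive, giving $Q=Q^{(1)}$, and $P=Q$ by the Smith result for degradable channels.

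\emph{Step 3 (optimization and evaluation).} For degradable channels, $I_c(\rho,\Lambda)$ is concave in $\rho$. The channel is Weyl covariant:
\begin{itemize}
\item conjugation by $Z^p$ commutes with any Schur multiplier;
\item conjugation by $X^x$ commutes with $G\circ(\cdot)$ because $G$ is circulant.
\end{itemize}
Therefore $I_c(W\rho W^\dagger,\Lambda)=I_c(\rho,\Lambda)$ for every Weyl operator $W$. Averaging over the Weyl group then shows that $I/D$ is optimal. Next, $S(\Lambda(I/D))=\log D$. Also $S(\Lambda^c(I/D))=S((\Lambda\otimes\id)(\Phi_D))$, where $\Phi_D$ is the maximally entangled state. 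This output equals $D^{-1}\sum_{x,y}g(x-y)\ket{xx}\bra{yy}$, whose spectrum is that of $G/D$. The eigenvalues of a circulant matrix are the Fourier transforms of $g$, and the Fourier transform of $g$ is $\bra{\widetilde p}\sigma\ket{\widetilde p}$ by Eq.~\eqref{eq:fourier}. Hence $S(\Lambda^c(I/D))=S(\Delta_F\sigma)$, which yields Eq.~\eqref{eq:b0-cap}.

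\emph{Step 4 (the $M_{c=0}$ case).} Here $U_M\ket{x,z}=\ket{x+z,z}$. I would repeat Step~1 using the $\delta$-function expansion from the proof of Theorem~\ref{thm:diagonalzero}. This should show that $\Lambda$ is a Schur multiplier in the Fourier basis, with Fourier states passing intact, and with circulant symbol whose Fourier transform is $\bra{z}\sigma\ket{z}$. Steps~2 and~3 then carry over with $Z$ and $F$ interchanged, giving Eq.~\eqref{eq:c0-cap}.

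I expect the main obstacle to be the bookkeeping in Step~3 and Step~4. One must check that the circulant eigenvalues are exactly the diagonal entries of $\Delta_F\sigma$ (respectively $\Delta_Z\sigma$), and not a relabeled or reflected version; a reflection $p\to-p$ is harmless for the entropy but needs to be confirmed. One must also verify in the $c=0$ case that the environment output genuinely depends only on the Fourier diagonal.
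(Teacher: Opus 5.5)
Your proposal is correct and follows essentially the paper's route: the Schur-multiplier form with symbol $\Tr(\sigma X^{x-y})$, a Stinespring dilation whose complement sees only $\Delta_Z\rho$ (so $\Lambda^c\circ\Delta_Z$ is the degrading map), concavity plus Weyl covariance making $I/D$ optimal, $P=Q$ by degradability, and $\chi=C=\log D$ from the noiselessly transmitted basis. The only difference is cosmetic: you get $S(\Lambda^c(I/D))$ from the circulant spectrum of $(\Lambda\otimes\id)(\Phi_D)$, whereas the paper Fourier-expands the purification of $\sigma$ and reads off a block-diagonal form; the reflection $p\to-p$ you flag is harmless.
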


\begin{proof}
By Theorem~\ref{thm:classification} and unitary invariance of the four capacities, it suffices to analyze the two displayed representatives. We first consider $M_{b=0}$, for which
\begin{equation*}
U_M\ket{x,y}=\ket{x,x+y}.
\end{equation*}
Choose a purification $\ket{\Psi_\sigma}_{ER}$ of $\sigma_E$ and define
\begin{equation*}
\ket{\xi_x}_{ER}:=(X^x\otimes I_R)\ket{\Psi_\sigma}_{ER}.
\end{equation*}
The Stinespring isometry is
\begin{equation*}
V\ket{x}_A=\ket{x}_B\ket{\xi_x}_{ER}.
\end{equation*}
Consequently, for $\rho=\sum_{x,x'}\rho_{xx'}\ket{x}\bra{x'}$,
\begin{align}
\Lambda(\rho)
&=\sum_{x,x'}\rho_{xx'}\langle\xi_{x'}|\xi_x\rangle\ket{x}\bra{x'}\nonumber\\
&=\sum_{x,x'}\rho_{xx'}\Tr(\sigma X^{x-x'})\ket{x}\bra{x'},
\label{eq:b0-output}\\
\Lambda^c(\rho)
&=\sum_x\rho_{xx}\proj{\xi_x}.
\label{eq:b0-complement}
\end{align}
In particular, $\Lambda(\proj{x})=\proj{x}$ and $\langle x|\Lambda(\rho)|x\rangle=\rho_{xx}$. Define
\begin{equation*}
\mathcal G(\tau)=\sum_x\langle x|\tau|x\rangle\proj{\xi_x}.
\end{equation*}
Then $\mathcal G\circ\Lambda=\Lambda^c$, proving degradability. Moreover, the complementary channel is classical--quantum and hence entanglement breaking; this is the standard generalized-dephasing structure \cite{HolevoComplementary,CubittRuskaiSmith}.

We next identify the maximizing input for coherent information. For any ensemble $\{p_j,\rho_j\}$ with average $\bar\rho$, set $B_j=\Lambda(\rho_j)$ and $E_j=\Lambda^c(\rho_j)=\mathcal G(B_j)$. Then
\begin{align*}
&I_c(\bar\rho,\Lambda)-\sum_jp_jI_c(\rho_j,\Lambda)\\
&\qquad=\chi(\{p_j,B_j\})-\chi(\{p_j,E_j\})\ge0,
\end{align*}
where the inequality is the data-processing inequality for Holevo information. Thus $I_c(\rho,\Lambda)$ is concave \cite{YardHaydenDevetak,CubittRuskaiSmith}.

The symmetry needed for twirling follows directly from Eqs.~\eqref{eq:b0-output} and \eqref{eq:b0-complement}. For $t,s\in V$,
\begin{align*}
\Lambda(X^t\rho X^{-t})&=X^t\Lambda(\rho)X^{-t},\\
\Lambda^c(X^t\rho X^{-t})
&=(X^t\otimes I_R)\Lambda^c(\rho)(X^{-t}\otimes I_R),\\
\Lambda(Z^s\rho Z^{-s})&=Z^s\Lambda(\rho)Z^{-s},\\
\Lambda^c(Z^s\rho Z^{-s})&=\Lambda^c(\rho).
\end{align*}
Hence, for $W_{t,s}:=X^tZ^s$,
\begin{equation*}
I_c(W_{t,s}\rho W_{t,s}^\dagger,\Lambda)=I_c(\rho,\Lambda).
\end{equation*}
The complete Weyl twirl satisfies
\begin{equation*}
\frac1{D^2}\sum_{t,s\in V}W_{t,s}\rho W_{t,s}^\dagger=\frac ID.
\end{equation*}
Concavity and symmetry therefore imply
\begin{equation*}
I_c(I/D,\Lambda)\ge I_c(\rho,\Lambda)
\end{equation*}
for every $\rho$.

At the maximally mixed input, Eq.~\eqref{eq:b0-output} gives $\Lambda(I/D)=I/D$, so the receiver entropy is $\log D$. To compute the complementary entropy, expand
\begin{equation*}
\ket{\Psi_\sigma}_{ER}=\sum_{p\in V}\ket{\widetilde p}_E\ket{r_p}_R,
\end{equation*}
where the vectors $\ket{r_p}$ need not be normalized. Using Eq.~\eqref{eq:fourier} and character orthogonality,
\begin{align*}
\Lambda^c(I/D)
&=\frac1D\sum_x(X^x\otimes I_R)\proj{\Psi_\sigma}(X^{-x}\otimes I_R)\\
&=\sum_{p,q}\left(\frac1D\sum_x\omega^{(p-q)\cdot x}\right)
\ket{\widetilde p}\bra{\widetilde q}\otimes\ket{r_p}\bra{r_q}\\
&=\sum_p\proj{\widetilde p}\otimes\ket{r_p}\bra{r_p}.
\end{align*}
This state is block diagonal, and each nonzero block has the single eigenvalue
\begin{align*}
\langle\widetilde p|\sigma|\widetilde p\rangle=\langle\widetilde p|(\Tr_R\ket{\Psi_{\sigma}}\bra{\Psi_{\sigma}})|\widetilde p\rangle=\langle r_p|r_p\rangle.
\end{align*}
Its nonzero spectrum is therefore the spectrum of $\Delta_F\sigma$, and
\begin{equation*}
I_c(I/D,\Lambda)=\log D-S(\Delta_F\sigma).
\end{equation*}
For degradable channels, the optimized coherent information is additive, so the quantum capacity single-letters, and the private capacity equals the quantum capacity \cite{DevetakShor2005,CubittRuskaiSmith,Smith2008}. Hence the second formula in Eq.~\eqref{eq:b0-cap} follows. The uniform computational-basis ensemble is transmitted perfectly, so $\chi\ge\log D$. The output dimension gives the matching upper bound, proving $\chi=C=\log D$.

For $M_{c=0}$, $U_M\ket{x,y}=\ket{x+y,y}$. For every computational-basis environment vector $\ket y$,
\begin{align*}
U_M(\ket{\widetilde p}\otimes\ket y)
&=\frac1{\sqrt D}\sum_x\omega^{-p\cdot x}\ket{x+y,y}\\
&=\ket{\widetilde p}\otimes Z^p\ket y.
\end{align*}
By linearity, the corresponding Stinespring isometry satisfies
\begin{equation*}
V\ket{\widetilde p}_A
=\ket{\widetilde p}_B\,(Z^p\otimes I_R)\ket{\Psi_\sigma}_{ER}.
\end{equation*}
Thus the Fourier basis is transmitted perfectly and the complementary channel depends only on the Fourier-basis probabilities. Repeating the preceding argument with $X$ and $Z$, and with the computational and Fourier bases, interchanged gives Eq.~\eqref{eq:c0-cap}.
\end{proof}

Table~\ref{tab:capacities} summarizes the four one-entry-vanishing branches.

\begin{remark}[Communication mechanism]
The zero position determines which information reaches the receiver. A diagonal zero transfers an environmental orbit to the output and produces a measure-and-prepare channel. An off-diagonal zero leaves one basis label noiseless, while only the conjugate coherence leaks to the environment; this is the origin of degradability and of the entropy-deficit formulas in Theorem~\ref{thm:offdiag}.
\end{remark}

\begin{table*}[t]
\caption{Capacity formulas for all invertible couplings with exactly one vanishing entry. The distinguished basis is measured in the diagonal-zero sector and transmitted in the off-diagonal-zero sector.}
\label{tab:capacities}
\centering
\small
\renewcommand{\arraystretch}{1.22}
\begin{tabular}{@{}ccccc@{}}
\toprule
Zero & Basis action & Channel structure & $\chi=C$ & $Q=P$\\
\midrule
$e=0$ & computational basis measured & shift-orbit measure-and-prepare & $\CF(\sigma)$ & $0$\\
$a=0$ & Fourier basis measured & phase-orbit measure-and-prepare & $\CZ(\sigma)$ & $0$\\
$b=0$ & computational basis transmitted & generalized dephasing & $\log D$ & $\log D-S(\Delta_F\sigma)$\\
$c=0$ & Fourier basis transmitted & generalized dephasing & $\log D$ & $\log D-S(\Delta_Z\sigma)$\\
\bottomrule
\end{tabular}
\end{table*}

\section{Fully positive classes and minimum output entropy}
\label{sec:fullpositive}

By Theorem~\ref{thm:classification}, every fully positive coupling is equivalent to one of the $d-2$ matrices $M_r$ in Eq.~\eqref{eq:Mr}. These channels are Weyl covariant, but they are not generally orbit channels or generalized-dephasing channels.

\begin{proposition}[Weyl-covariant reduction]
\label{prop:weyl}
For a fully positive coupling $M$,
\begin{align}
\chi(\Lambda_{M,\sigma})&=\log D-\Smin(\Lambda_{M,\sigma}),\label{eq:chi-minout}\\
C(\Lambda_{M,\sigma})&=\log D-\Sreg(\Lambda_{M,\sigma}),
\label{eq:C-minout}
\end{align}
where
\begin{align*}
\Smin(\Phi)&=\min_\rho S(\Phi(\rho)),\\
\Sreg(\Phi)&=\lim_{m\to\infty}\frac1m\Smin(\Phi^{\otimes m}).
\end{align*}
\end{proposition}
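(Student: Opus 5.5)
The plan is to show that $\Lambda_{M,\sigma}$ is covariant with respect to the Weyl group on $V$, with the output action running through the entire Weyl group. The formulas then follow from the standard Holevo argument for irreducibly covariant channels.

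\emph{Covariance.} Write $M=\begin{pmatrix}a&b\\c&e\end{pmatrix}$ and $\Delta=ae-bc\neq0$.
\begin{itemize}
\item \emph{Shifts.} From Eq.~\eqref{eq:UM} one checks directly that $U_M(X^t\otimes I)=(X^{at}\otimes X^{ct})U_M$.
\item \emph{Phases.} A phase on the input alone is transported to phases on the outputs. We look for $(u,v)$ with $U_M(Z^s\otimes I)=(Z^u\otimes Z^v)U_M$. Acting on $\ket{x,y}$, this requires $u(ax+by)+v(cx+ey)=s\cdot x$ for all $x,y$. Hence $au+cv=s$ and $bu+ev=0$. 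Invertibility of $M$ gives the solution $u=e s/\Delta$ and $v=-b s/\Delta$.
\item \emph{Partial trace.} The unitaries $X^{ct}$ and $Z^{v}$ act only on the traced-out second output, so they cancel under $\Tr_2$ by cyclicity of the partial trace. The environment input is never touched, so $\sigma$ is unchanged.
\end{itemize}
Combining these gives
\begin{equation*}
\Lambda_{M,\sigma}\bigl(W_{t,s}\rho W_{t,s}^\dagger\bigr)=W_{at,\,es/\Delta}\,\Lambda_{M,\sigma}(\rho)\,W_{at,\,es/\Delta}^\dagger ,
\end{equation*}
with $W_{t,s}=X^tZ^s$. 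Phase factors from reordering $X$ and $Z$ cancel in the conjugation. Since $a,e\neq0$ in the fully positive sector, the map $(t,s)\mapsto(at,es/\Delta)$ is a bijection of $V\oplus V$. Therefore the output operators range over the whole Weyl group.

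\emph{One-shot formula.} For the upper bound, any ensemble satisfies $\chi\le S(\bar\omega)-\sum_jp_jS(\omega_j)\le\log D-\Smin(\Lambda_{M,\sigma})$. For the lower bound, let $\rho^\ast$ attain $\Smin$, which exists by compactness and continuity. Take the uniform ensemble $\{D^{-2},W_{t,s}\rho^\ast W_{t,s}^\dagger\}$. By covariance, each output has entropy $\Smin$. The average output is the full Weyl twirl of $\Lambda_{M,\sigma}(\rho^\ast)$, which equals $I/D$ by the twirl identity already used in the proof of Theorem~\ref{thm:offdiag}. This gives Eq.~\eqref{eq:chi-minout}.

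\emph{Regularized formula.} Apply the same reasoning to $\Lambda_{M,\sigma}^{\otimes m}$. It is covariant under the tensor-product Weyl group of $V^{m}$, which is the Weyl group of $\Fd^{nm}$, and the output action is again a bijection. So $\chi(\Lambda^{\otimes m})=m\log D-\Smin(\Lambda^{\otimes m})$. Dividing by $m$ and letting $m\to\infty$ gives Eq.~\eqref{eq:C-minout}. The limit defining $\Sreg$ exists because $\Smin(\Lambda^{\otimes m})$ is subadditive in $m$ (use product inputs), so Fekete's lemma applies.

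The main point needing care is the phase covariance. A naive $Z^s\otimes Z^{s'}$ would alter the environment, so one must solve with environment exponent $s'=0$ and check that the resulting output exponent $es/\Delta$ sweeps all of $V$. This is exactly where $e\neq0$, and invertibility of $M$, enter.
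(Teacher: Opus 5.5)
Your proposal is correct and follows essentially the same route as the paper. You use the same intertwining identities $U_M(X^t\otimes I)=(X^{at}\otimes X^{ct})U_M$ and $U_M(Z^s\otimes I)=(Z^{es/\Delta}\otimes Z^{-bs/\Delta})U_M$, observe that $a,e\neq0$ makes the output action cover the full Weyl group, saturate the $\log D-\Smin$ bound with the uniform Weyl orbit of an entropy minimizer, and then repeat the argument for tensor powers; your Fekete-lemma justification that the limit defining $\Sreg$ exists is a small addition the paper leaves implicit.
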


\begin{proof}
Let $\delta=ae-bc$. The intertwining identities
\begin{align*}
U_M(X^t\otimes I)&=(X^{at}\otimes X^{ct})U_M,\\
U_M(Z^s\otimes I)&=(Z^{e\delta^{-1}s}\otimes Z^{-b\delta^{-1}s})U_M.
\end{align*}
show that conjugating the input by $X^t$ or $Z^s$ conjugates the receiver output by $X^{at}$ or $Z^{e\delta^{-1}s}$, respectively. Since $a,e\ne0$, these transformations generate the full output Weyl group. Let $\rho_*$ minimize the output entropy. The uniform Weyl orbit of $\rho_*$ has average input $I/D$, covariance makes all orbit outputs isospectral, and the average output is $I/D$. Its Holevo information is therefore $\log D-\Smin(\Lambda_{M,\sigma})$, which also saturates the general upper bound. Applying the same argument to tensor powers and regularizing gives Eq.~\eqref{eq:C-minout} \cite{AmosovHolevoWerner}.
\end{proof}

\begin{proposition}[Basis coherence is not a complete invariant]
\label{prop:notcoherence}
For every fully positive coupling $M$:
\begin{enumerate}
\item $\sigma=I/D$ gives the completely depolarizing channel and $C(\Lambda_{M,I/D})=0$;
\item $\CZ(\proj{0})=0$, but $C(\Lambda_{M,\proj{0}})=\log D$;
\item $\CF(\proj{\widetilde0})=0$, but $C(\Lambda_{M,\proj{\widetilde0}})=\log D$.
\end{enumerate}
Consequently, neither computational-basis nor Fourier-basis coherence alone determines the classical capacity in the fully positive sector.
\end{proposition}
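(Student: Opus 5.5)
The plan is to evaluate $\Lambda_{M,\sigma}$ in closed form for each of the three environments directly from Eqs.~\eqref{eq:UM} and \eqref{eq:channel}, using only that all four entries $a,b,c,e$ of $M$ are nonzero. I will not reduce to $M_r$, since no step needs it. In every case the channel turns out to be either a constant (replacer) map or a basis-dephasing map followed by a permutation. The capacities are then immediate: for an upper bound, $C\le\log D$ by output dimension; for a lower bound, a perfectly transmitted basis gives $\chi\ge\log D$. The coherence claims are trivial, since $\proj{0}$ is diagonal in the computational basis and $\proj{\widetilde0}$ is diagonal in the Fourier basis, so $\CZ(\proj0)=\CF(\proj{\widetilde0})=0$.

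For item 1, expand
\begin{equation*}
\Lambda(\rho)=\frac1D\sum_{x,x',y}\rho_{xx'}\langle cx'+ey|cx+ey\rangle\ket{ax+by}\bra{ax'+by}.
\end{equation*}
Because $c\ne0$, the inner product forces $x=x'$. For fixed $x$, the label $ax+by$ then runs over all of $V$ as $y$ does, because $b\ne0$. Hence $\Lambda(\rho)=\Tr(\rho)\,I/D$ for every $\rho$, i.e.\ the completely depolarizing channel. Every tensor power $\Lambda^{\otimes m}$ is again a replacer channel, so $\chi(\Lambda^{\otimes m})=0$ for all $m$, and therefore $C=0$.

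For item 2, $U_M\ket{x,0}=\ket{ax,cx}$. Tracing out the second output forces the diagonal, again by $c\ne0$, and gives
\begin{equation*}
\Lambda(\rho)=\sum_x\rho_{xx}\proj{ax}.
\end{equation*}
Since $a\ne0$, $x\mapsto ax$ is a permutation. The uniform computational-basis ensemble therefore has output average $I/D$ and pure outputs, so $\chi\ge\log D$, and $\log D\ge C\ge\chi$ closes the argument.

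Item 3 is where the only real bookkeeping lies, and I expect it to be the main obstacle. The point is to see that $U_M$ acts on the Fourier product basis by the inverse transpose. Write $(u,v)=M(x,y)$. Then $p\cdot x+q\cdot y=(M^{-T}(p,q))\cdot(u,v)$, so
\begin{equation*}
U_M\ket{\widetilde p,\widetilde q}=\ket{\widetilde{p'},\widetilde{q'}},\qquad (p',q')=M^{-T}(p,q).
\end{equation*}
With $q=0$ this gives $p'=e\delta^{-1}p$ and $q'=-b\delta^{-1}p$, where $\delta=ae-bc$; this is consistent with the intertwining relations used in Proposition~\ref{prop:weyl}. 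Because $b\ne0$, the discarded label $q'$ determines $p$, so the partial trace kills all Fourier off-diagonal terms and leaves
\begin{equation*}
\Lambda(\rho)=\sum_p\bra{\widetilde p}\rho\ket{\widetilde p}\,\proj{\widetilde{e\delta^{-1}p}}.
\end{equation*}
Since $e\ne0$, the map $p\mapsto e\delta^{-1}p$ is a bijection, so the Fourier basis is transmitted perfectly and the same dimension argument gives $C=\log D$. Items 2 and 3 together exhibit environments with vanishing computational or Fourier coherence but maximal capacity, while item 1 is an environment that is zero-coherence in both bases, yet with zero capacity. This yields the concluding sentence.
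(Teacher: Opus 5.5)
Your proof is correct and follows essentially the same route as the paper. You evaluate each channel directly from $U_M$: complete depolarization using $b,c\neq0$, perfect computational-basis transmission using $a,c\neq0$, and perfect Fourier-basis transmission via $U_M\ket{\widetilde p,\widetilde q}=\ket{\widetilde{p'},\widetilde{q'}}$ with $(p',q')=M^{-T}(p,q)$ and $b,e\neq0$, then close items 2 and 3 with the output-dimension bound $\log D$; your explicit pairing of item 1 with items 2 and 3 to justify the concluding sentence spells out what the paper leaves implicit.
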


\begin{proof}
For $\sigma=I/D$, expand $\rho=\sum_{x,x'}\rho_{xx'}\ket{x}\bra{x'}$. Tracing the second output gives
\begin{align*}
\Lambda_{M,I/D}(\rho)
&=\frac1D\sum_{x,x',y}\rho_{xx'}
\delta_{c(x-x'),0}\ket{ax+by}\bra{ax'+by}\\
&=\frac1D\sum_{x,y}\rho_{xx}\proj{ax+by}=\frac ID,
\end{align*}
where $b,c\ne0$ were used in the second equality. For $\sigma=\proj0$, Eq.~\eqref{eq:UM} maps $\ket{x,0}$ to $\ket{ax,cx}$, so the computational basis is transmitted perfectly. In the Fourier basis,
\begin{equation*}
U_M(\ket{\widetilde p}\otimes\ket{\widetilde q})
=\ket{\widetilde{\delta^{-1}(ep-cq)}}
\otimes\ket{\widetilde{\delta^{-1}(-bp+aq)}},
\end{equation*}
where $\delta=ae-bc$. Hence, for $\sigma=\proj{\widetilde0}$, the receiver obtains the distinct labels $\widetilde{e\delta^{-1}p}$, and a complete Fourier basis is transmitted perfectly. The latter two ensembles attain the output-dimension upper bound $\log D$.
\end{proof}

Thus the fully positive sector is the only nontrivial sector in the classification for which the general classical capacity remains a regularized minimum-output-entropy problem.

\section{An exact Holevo formula for a fully positive qutrit slice}
\label{sec:qutrit}

Set $d=3$ and $n=1$. The fully positive sector has a single class because the cross-ratio can only be $r=2$. We choose
\begin{equation*}
M_2=\begin{pmatrix}2&1\\1&1\end{pmatrix},\qquad
\ket{m}=\frac{\ket0+\ket1}{\sqrt2},
\end{equation*}
We then consider
\begin{equation}
\sigma_t=(1-t)\frac I3+t\proj m,\qquad 0\le t\le1.
\label{eq:sigmat}
\end{equation}
The pure endpoint $t=1$ is coupling-equivalent to the balanced qutrit point in Ref.~\cite{XiongMean2026}. The depolarized interpolation in Eq.~\eqref{eq:sigmat}, its all-parameter Holevo formula, and the threshold below are specific to the present analysis.

\begin{theorem}[Exact qutrit Holevo information and entanglement-breaking threshold]
\label{thm:qutrit}
For $\Lambda_{M_2,\sigma_t}$,
\begin{equation}
\chi(\Lambda_{M_2,\sigma_t})
=\log_2 3-H\!\left(\frac{2+t}{6},\frac{2+t}{6},\frac{1-t}{3}\right).
\label{eq:qutrit-chi}
\end{equation}
For $0\le t\le2/5$, the channel is entanglement breaking and
\begin{equation*}
C(\Lambda_{M_2,\sigma_t})=\chi(\Lambda_{M_2,\sigma_t}).
\end{equation*}
For $t>2/5$, its normalized Choi state is NPT, so the channel is not entanglement breaking.
\end{theorem}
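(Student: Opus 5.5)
By Proposition~\ref{prop:weyl} (note that $M_2$ is fully positive), it suffices to show $\Smin(\Lambda_{M_2,\sigma_t})=H\bigl(\tfrac{2+t}{6},\tfrac{2+t}{6},\tfrac{1-t}{3}\bigr)$. The first step uses linearity of $\Lambda_{M,\sigma}$ in $\sigma$ together with Proposition~\ref{prop:notcoherence}(1). Together they give
\begin{equation*}
\Lambda_{M_2,\sigma_t}(\rho)=(1-t)\frac I3+t\,\Lambda_{M_2,\proj m}(\rho).
\end{equation*}
If $\lambda=(\lambda_1\ge\lambda_2\ge\lambda_3)$ is the spectrum of $\Lambda_{M_2,\proj m}(\rho)$, the output spectrum is $(1-t)/3+t\lambda_i$. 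This affine map preserves majorization, and entropy is Schur concave. So it suffices to prove two things for the pure endpoint. First, every output spectrum is majorized by $(\tfrac12,\tfrac12,0)$. In dimension three this is equivalent to $\lambda_1\le\tfrac12$, since $\lambda_1+\lambda_2\le 1$ is automatic. Second, this spectrum is attained.

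Attainment is direct. We have $U_{M_2}\ket{x}\ket{m}=(\ket{2x,x}+\ket{2x+1,x+1})/\sqrt2$. The second-register labels $x$ and $x+1$ are distinct, so tracing them out gives $\Lambda_{M_2,\proj m}(\proj x)=\tfrac12(\proj{2x}+\proj{2x+1})$. Its spectrum is $(\tfrac12,\tfrac12,0)$, and plugging in produces exactly the entries $\tfrac{2+t}{6},\tfrac{2+t}{6},\tfrac{1-t}{3}$.

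For the bound $\lambda_1\le\tfrac12$, the largest output eigenvalue is $\max_{\psi,\phi}\|(\bra\phi\otimes I)U_{M_2}\ket\psi\ket m\|^2$. This is the largest squared Schmidt coefficient of $U_{M_2}\ket{\psi}\ket{m}=(\ket{u_0}+\ket{u_1})/\sqrt2$, where $\ket{u_y}=U_{M_2}\ket{\psi,y}$ are orthonormal. I would bound it in one of two ways. The first is to take it from the pure-endpoint minimum-output-entropy analysis of Ref.~\cite{XiongMean2026} after transporting along the coupling equivalence of Theorem~\ref{thm:classification}. The second is to verify it directly by parametrizing $\phi,\psi\in\mathbb C^3$, writing the $3\times3$ Kraus-type overlap matrix, and maximizing its squared Frobenius norm. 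I expect this optimization to be the first real obstacle. Weyl covariance lets us fix $\phi=\ket0$, which reduces it to bounding the operator norm of a single explicit $3\times 3$ positive matrix of the form $\sum_z K_z^\dagger\proj0K_z$ by $\tfrac12$.

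For the threshold, the normalized Choi state is $J_t=(1-t)I/9+tJ_1$, where $J_1$ is the Choi state of the pure endpoint. This follows again from linearity in $\sigma$. I would compute $J_1^{\Gamma}$ explicitly, using the Weyl covariance to block-diagonalize it. The target is its minimal eigenvalue $-\tfrac16$, because then the smallest eigenvalue of $J_t^\Gamma$ is $(1-t)/9-t/6$, which is negative exactly when $t>2/5$; this gives NPT and hence non-entanglement-breaking. For $t\le2/5$, $J_t$ is a convex combination of $J_{2/5}$ and the product state $I/9$. It therefore suffices to show $J_{2/5}$ is separable. PPT is not sufficient in $3\times3$, so this is the main obstacle. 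I would first try an explicit decomposition: take a product vector on the boundary (an eigenvector-related product state of $J_{2/5}^\Gamma$), apply the Weyl-covariance twirl $(W\otimes\overline{W})$, and match the result to $J_{2/5}$ by adjusting the weights of a few orbits. Once entanglement breaking is established, additivity of the Holevo information for entanglement-breaking channels \cite{ShorEBC} gives $C=\chi$ on $0\le t\le2/5$.
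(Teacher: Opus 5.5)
\textbf{Gap in the Holevo formula.} Your NPT computation is fine: $\lambda_{\min}(J_1^{T_B})=-1/6$ is correct and reproduces $(2-5t)/18$. The Holevo-formula step, however, rests on a false claim. It is not true that every output spectrum of $\Lambda_{M_2,\proj m}$ is majorized by $(\tfrac12,\tfrac12,0)$. Take the Fourier input $\ket{\widetilde 0}=(\ket0+\ket1+\ket2)/\sqrt3$. Conditioning on the discarded label gives the branches $(\ket0+\ket2)/\sqrt6$, $(\ket1+\ket2)/\sqrt6$ and $(\ket0+\ket1)/\sqrt6$. The output is therefore $\tfrac16\bigl(\begin{smallmatrix}2&1&1\\1&2&1\\1&1&2\end{smallmatrix}\bigr)$, with spectrum $(\tfrac23,\tfrac16,\tfrac16)$, so $\lambda_1=\tfrac23>\tfrac12$. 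The spectra $(\tfrac12,\tfrac12,0)$ and $(\tfrac23,\tfrac16,\tfrac16)$ are incomparable in majorization order ($\tfrac23>\tfrac12$ but $\tfrac23+\tfrac16<1$). Hence no Schur-concavity argument can pick the minimizer, and neither citing Ref.~\cite{XiongMean2026} nor a direct optimization can yield $\lambda_1\le\tfrac12$. Your proposed reduction would also hide the counterexample. Weyl covariance only moves $\phi$ within its Weyl orbit, and the orbit of $\ket0$ is just the computational basis. So ``fixing $\phi=\ket0$'' returns $\max_\psi\langle0|\tau_\psi|0\rangle=\max(u+v)/2=\tfrac12$, falsely confirming the bound and missing $\phi=\ket{\widetilde0}$.

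\textbf{What is missing.} You need the invariant the paper uses: $\Tr\tau_\psi^2=\tfrac12$ for every pure input. This confines the possible spectra to the circle $\sum_i\lambda_i^2=\tfrac12$, which is the inscribed circle of the simplex and touches its edges at the permutations of $(\tfrac12,\tfrac12,0)$. You then minimize $H\bigl((1-t)/3+t\lambda_i\bigr)$ over that circle. Boundary points give $q^{\mathrm{bdry}}(t)$. Strict convexity in the Lagrange equations forces interior stationary points to have two equal components, which leaves only $(\tfrac23,\tfrac16,\tfrac16)$. A genuinely $t$-dependent comparison then settles every $t$: $F(t)=H(q^{\mathrm{int}})-H(q^{\mathrm{bdry}})$ satisfies $F(0)=0$ and $F'(t)=\tfrac13\log_2\frac{4-t^2}{4(1-t^2)}>0$. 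Fixed purity makes all outputs agree in entropy to second order in $t$, so the comparison is decided at third order, another sign that majorization is the wrong tool.

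\textbf{Entanglement-breaking range.} Your plan to exhibit a separable decomposition of $J_{2/5}$ is viable but harder than needed. The map $\sigma\mapsto\Lambda_{M,\sigma}$ is affine and entanglement-breaking channels form a convex set. It therefore suffices to write $\sigma_{2/5}=\tfrac15\bigl(\proj0+\proj1+\sum_k\proj{\varphi_k}\bigr)$ as a stabilizer mixture and invoke the stabilizer-environment result of Ref.~\cite{XiongPrivate2026}, which is how the paper proceeds.
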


\begin{proof}
For a pure input $\ket\psi$, let $\tau_\psi=\Lambda_{M_2,\proj m}(\proj\psi)$. Appendix~\ref{app:qutrit-entropy} proves that every $\tau_\psi$ has purity $1/2$ and that its entropy is minimized by the spectrum $(1/2,1/2,0)$, attained by computational-basis inputs. Since $\Lambda_{M_2,I/3}$ is completely depolarizing, linearity in the environment gives
\begin{equation*}
\Lambda_{M_2,\sigma_t}(\proj\psi)=\frac{1-t}{3}I+t\tau_\psi.
\end{equation*}
The minimum output spectrum is therefore
\begin{equation*}
\left(\frac{2+t}{6},\frac{2+t}{6},\frac{1-t}{3}\right),
\end{equation*}
and Proposition~\ref{prop:weyl} gives Eq.~\eqref{eq:qutrit-chi}.

At $t=2/5$,
\begin{align*}
\sigma_{2/5}&=\frac15\left(\proj0+\proj1+\sum_{k=0}^2\proj{\varphi_k}\right),\\
\ket{\varphi_k}&=\frac{\ket0+\ket1+\omega^k\ket2}{\sqrt3}.
\end{align*}
which is a convex mixture of qutrit stabilizer states. For $0\le t\le2/5$,
\begin{equation*}
\sigma_t=\left(1-\frac{5t}{2}\right)\frac I3+\frac{5t}{2}\sigma_{2/5}.
\end{equation*}
so $\sigma_t$ is also a stabilizer mixture. The stabilizer-environment theorem of Ref.~\cite{XiongPrivate2026} implies that a fully positive convolutional channel with such an environment is entanglement breaking. The convex decomposition above therefore covers the full interval $0\le t\le2/5$.

For the converse, Appendix~\ref{app:qutrit-choi} gives an explicit block diagonalization of the partially transposed normalized Choi state. Its eigenvalues are
\begin{equation*}
\frac{t+2}{18}\ (6\text{-fold}),\qquad
\frac{1+2t}{9},\qquad
\frac{2-5t}{18}\ (2\text{-fold}).
\end{equation*}
The last pair is negative exactly for $t>2/5$, proving the stated threshold.
\end{proof}

Figure~\ref{fig:qutrit} compares the exact one-shot Holevo information with the computational- and Fourier-basis coherences of the environmental state. The shaded region marks the entanglement-breaking interval \(0\le t\le 2/5\).

\begin{figure}[t]
\centering
\begin{tikzpicture}
\begin{axis}[
width=\columnwidth,height=5.5cm,
xlabel={$t$},ylabel={bits},
xmin=0,xmax=1,ymin=0,ymax=1.35,
legend style={font=\scriptsize,at={(0.02,0.98)},anchor=north west,draw=none,fill=white},
tick label style={font=\scriptsize},label style={font=\small},axis lines=left]
\addplot[draw=none,fill=gray!20] coordinates {(0,0) (0.4,0) (0.4,1.35) (0,1.35)} \closedcycle;
\addplot[black,thick] coordinates {(0.00,0.000000) (0.05,0.000909) (0.10,0.003672) (0.15,0.008343) (0.20,0.014988) (0.25,0.023684) (0.30,0.034519) (0.35,0.047595) (0.40,0.063034) (0.45,0.080981) (0.50,0.101607) (0.55,0.125122) (0.60,0.151786) (0.65,0.181926) (0.70,0.215967) (0.75,0.254479) (0.80,0.298270) (0.85,0.348566) (0.90,0.407454) (0.95,0.479338) (1.00,0.584963)};
\addlegendentry{$\chi$}
\addplot[black,dashed] coordinates {(0.00,0.000000) (0.05,0.002641) (0.10,0.010340) (0.15,0.022815) (0.20,0.039849) (0.25,0.061278) (0.30,0.086985) (0.35,0.116896) (0.40,0.150978) (0.45,0.189237) (0.50,0.231727) (0.55,0.278549) (0.60,0.329869) (0.65,0.385926) (0.70,0.447067) (0.75,0.513794) (0.80,0.586850) (0.85,0.667401) (0.90,0.757483) (0.95,0.861449) (1.00,1.000000)};
\addlegendentry{$\CZ(\sigma_t)$}
\addplot[black,dashdotted] coordinates {(0.00,0.000000) (0.05,0.002656) (0.10,0.010461) (0.15,0.023224) (0.20,0.040826) (0.25,0.063202) (0.30,0.090346) (0.35,0.122303) (0.40,0.159174) (0.45,0.201120) (0.50,0.248371) (0.55,0.301241) (0.60,0.360151) (0.65,0.425664) (0.70,0.498543) (0.75,0.579846) (0.80,0.671108) (0.85,0.774700) (0.90,0.894719) (0.95,1.039892) (1.00,1.251629)};
\addlegendentry{$\CF(\sigma_t)$}
\draw[densely dashed] (axis cs:0.4,0)--(axis cs:0.4,1.35);
\node[font=\scriptsize,anchor=south west] at (axis cs:0.41,1.17) {$t=2/5$};
\node[font=\scriptsize] at (axis cs:0.18,1.26) {EBC};
\end{axis}
\end{tikzpicture}
\caption{Exact one-shot Holevo information for the depolarized qutrit slice, compared with the two basis coherences. The shaded interval is entanglement breaking, where $C=\chi$. Beyond $t=2/5$, the solid curve remains the exact one-shot quantity; additivity is not asserted.}
\label{fig:qutrit}
\end{figure}

\section{Discussion and outlook}
\label{sec:discussion}

The classification establishes a direct link between the algebraic coupling structure, the induced channel class, and communication capacity. A vanishing diagonal coefficient produces an environment-orbit measure-and-prepare channel, so environmental coherence controls the classical capacity while quantum and private communication are impossible. A vanishing off-diagonal coefficient instead preserves one complete basis and produces a degradable generalized-dephasing channel. Its classical capacity is maximal for every environment, whereas the loss of quantum and private capacity is exactly the dephased entropy of the environment in the conjugate basis. The four one-zero branches are therefore completely characterized at the level of optimized Holevo information and classical, quantum, and private capacities.

The fully positive sector is the genuinely mixing regime. Weyl covariance gives an exact minimum-output-entropy formula for the one-shot Holevo information, but no single computational- or Fourier-basis coherence determines the regularized capacity. The qutrit slice demonstrates that exact one-shot formulas and sharp entanglement-breaking transitions remain accessible in this regime. Its overlap with Ref.~\cite{XiongMean2026} is restricted to the balanced pure endpoint; the depolarized interpolation and its exact EBC/NPT threshold are specific to the present work.

Two questions remain. For $d\ge5$, the $d-2$ cross-ratio classes may exhibit genuinely different additivity behavior. More generally, characterizing the environments for which minimum output entropy is additive would complete the classical-capacity landscape beyond the one-zero sector.

\begin{acknowledgments}
C.X. is supported by the National Natural Science Foundation of China (Grant No.~12201555) and the Natural Science Foundation of Hunan Province (Grant No.~2025JJ50050).
\end{acknowledgments}

\appendix

\section{Orbit-channel capacity formula}
\label{app:orbit}

\begin{lemma}
For the orbit channel in Eq.~\eqref{eq:orbit-channel}, the uniform basis ensemble is optimal and
\begin{equation*}
\chi(\Phi_\tau)=C(\Phi_\tau)=S(\mathcal T_\Gamma\tau)-S(\tau),
\end{equation*}
where $\mathcal T_\Gamma(\cdot)=|\Gamma|^{-1}\sum_gU_g(\cdot)U_g^\dagger$. Moreover, $Q(\Phi_\tau)=P(\Phi_\tau)=0$.
\end{lemma}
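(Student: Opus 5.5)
The plan is to prove achievability for $\chi$, then a matching upper bound on the regularized capacity $C$, and finally the vanishing of $Q$ and $P$ from entanglement breaking. Here $\{\ket g\}$ is an orthonormal family indexed by $\Gamma$, as in the two applications where it is the computational or Fourier basis.

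\textbf{Achievability.} Take the uniform ensemble $\{1/|\Gamma|,\proj g\}$. The outputs are $\Phi_\tau(\proj g)=U_g\tau U_g^\dagger$, each with entropy $S(\tau)$, and their average is $\mathcal T_\Gamma\tau$. Hence
\begin{equation*}
\chi(\Phi_\tau)\ge S(\mathcal T_\Gamma\tau)-S(\tau).
\end{equation*}

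\textbf{Converse.} I would note that $\Phi_\tau$ is entanglement breaking: it is measure-and-prepare, with measurement $\{\proj g\}$ (completed to a POVM if necessary) and preparation $U_g\tau U_g^\dagger$. By Shor's additivity theorem for entanglement-breaking channels, $C(\Phi_\tau)=\chi(\Phi_\tau)$. It therefore suffices to bound the one-shot quantity. For an arbitrary ensemble $\{p_j,\rho_j\}$, set $q_j(g)=\bra g\rho_j\ket g$. Each output is $\Phi_\tau(\rho_j)=\sum_g q_j(g)\,U_g\tau U_g^\dagger$.

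The key observation is that the Holevo quantity is unchanged by the group average. Define $\mathcal E(\omega)=|\Gamma|^{-1}\sum_h \proj h\otimes U_h\omega U_h^\dagger$. Here $\mathcal E$ is an isometry-like embedding, so its output entropy equals $\log|\Gamma|+S(\omega)$ when each $U_h$ is unitary. A cleaner route is to pass directly to the twirled outputs. Using covariance $U_h\Phi_\tau(\rho)U_h^\dagger=\Phi_\tau(\text{relabeled }\rho)$, one reduces to ensembles whose average output is $\mathcal T_\Gamma\tau$. This gives
\begin{equation*}
\chi\le S(\mathcal T_\Gamma\tau)-\min_j S(\Phi_\tau(\rho_j)).
\end{equation*}
Concavity of entropy then gives
\begin{equation*}
S\Bigl(\sum_g q(g)\,U_g\tau U_g^\dagger\Bigr)\ge\sum_g q(g)\,S(U_g\tau U_g^\dagger)=S(\tau),
\end{equation*}
so the minimum output entropy is $S(\tau)$. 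This yields $\chi\le S(\mathcal T_\Gamma\tau)-S(\tau)$ once the average output entropy is bounded by $S(\mathcal T_\Gamma\tau)$.

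\textbf{Main obstacle.} That last bound is the difficult step. I would establish it as follows: replace an optimal ensemble by its $\Gamma$-relabeled copies $\{p_j/|\Gamma|,\ \rho_j^{(h)}\}$, where $q_j^{(h)}(g)=q_j(h^{-1}g)$. For the regular-orbit setting this relabeling is realized by permuting the basis $\{\ket g\}$. Its outputs are $U_h\Phi_\tau(\rho_j)U_h^\dagger$, so individual output entropies are unchanged. The relabeled ensemble is the original ensemble tensored with a uniform index, and the resulting Holevo quantity is at least the original one, because it has the same conditional entropies and a larger average entropy by concavity. Its average output is $\mathcal T_\Gamma\bigl(\sum_j p_j\Phi_\tau(\rho_j)\bigr)=\mathcal T_\Gamma\tau$, since $\mathcal T_\Gamma$ absorbs the $U_g$ factors.

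\textbf{Quantum and private capacities.} Since $\Phi_\tau$ is entanglement breaking, $\Phi_\tau^{\otimes m}$ is entanglement breaking as well. Its coherent information is nonpositive and its private information is zero, because the environment can reproduce the measurement outcome. Hence $Q=P=0$, as in Horodecki--Shor--Ruskai.
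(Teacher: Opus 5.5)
Your proof is correct and follows essentially the same route as the paper: the uniform basis ensemble for achievability, $S(\Phi_\tau(\rho))\ge S(\tau)$ from concavity, a twirling bound on the average-output entropy, and entanglement breaking to get $C=\chi$ and $Q=P=0$. Your relabeled-ensemble construction only repackages the single inequality the paper uses directly, $S(\bar\eta)\le S(\mathcal T_\Gamma\bar\eta)=S(\mathcal T_\Gamma\tau)$ for the average output $\bar\eta$, so you can drop the covariance detour and the unused $\mathcal E$ embedding.
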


\begin{proof}
For a probability distribution $r$ on $\Gamma$, set
\begin{equation*}
\eta_r=\sum_gr(g)U_g\tau U_g^\dagger.
\end{equation*}
Concavity and unitary invariance of entropy give $S(\eta_r)\ge S(\tau)$. Moreover, $\mathcal T_\Gamma(\eta_r)=\mathcal T_\Gamma(\tau)$. For an arbitrary output ensemble $\{p_j,\eta_{r_j}\}$, random-unitary twirling of its average output gives
\begin{equation*}
S\!\left(\sum_jp_j\eta_{r_j}\right)
\le S(\mathcal T_\Gamma\tau),
\end{equation*}
whereas $\sum_jp_jS(\eta_{r_j})\ge S(\tau)$. Therefore
\begin{equation*}
\chi\le S(\mathcal T_\Gamma\tau)-S(\tau).
\end{equation*}
The uniform basis ensemble has outputs $U_g\tau U_g^\dagger$ and average $\mathcal T_\Gamma\tau$, so it attains the bound. The channel is measure and prepare and hence entanglement breaking. Strong additivity gives $C=\chi$, while $Q=P=0$ \cite{ShorEBC,HorodeckiShorRuskai}.
\end{proof}

\section{Entropy minimization for the qutrit slice}
\label{app:qutrit-entropy}

For $\ket\psi=\alpha\ket0+\beta\ket1+\gamma\ket2$, set $u=|\alpha|^2$, $v=|\beta|^2$, and $w=|\gamma|^2$. Direct evaluation gives
\begin{equation*}
\tau_\psi=\frac12
\begin{pmatrix}
 u+v & \beta\bar\gamma & \alpha\bar\gamma\\
 \bar\beta\gamma & u+w & \alpha\bar\beta\\
 \bar\alpha\gamma & \bar\alpha\beta & v+w
\end{pmatrix},
\end{equation*}
and hence
\begin{equation*}
\Tr\tau_\psi=1,\qquad \Tr\tau_\psi^2=\frac12.
\end{equation*}
Let $\lambda=(\lambda_1,\lambda_2,\lambda_3)$ be its spectrum and set $q_i=(1-t)/3+t\lambda_i$. To lower-bound the output entropy, it is sufficient to minimize $H(q_1,q_2,q_3)$ over the larger compact set
\begin{equation}
\lambda_i\ge0,\qquad \sum_i\lambda_i=1,\qquad \sum_i\lambda_i^2=\frac12.
\label{eq:spectral-constraints}
\end{equation}
Any minimizer lies either on the boundary or at an interior stationary point. On the boundary, Eq.~\eqref{eq:spectral-constraints} forces, up to permutation,
\begin{align*}
\lambda^{\mathrm{bdry}}&=\left(\frac12,\frac12,0\right),\\
q^{\mathrm{bdry}}(t)&=\left(\frac{2+t}{6},\frac{2+t}{6},\frac{1-t}{3}\right).
\end{align*}

For $0<t\le1$, an interior stationary point satisfies the Lagrange equations
\begin{equation*}
-\frac{t}{\ln2}\,[\ln q_i+1]=A+2B\lambda_i.
\end{equation*}
As a function of $\lambda_i$, the second derivative of the left-hand side is $t^3/(q_i^2\ln2)>0$. A strictly convex function intersects an affine function at no more than two points; hence the stationary spectrum has at most two distinct components. The three $\lambda_i$ therefore have two equal components; the constraints give
\begin{align*}
\lambda^{\mathrm{int}}&=\left(\frac23,\frac16,\frac16\right),\\
q^{\mathrm{int}}(t)&=\left(\frac{1+t}{3},\frac{2-t}{6},\frac{2-t}{6}\right).
\end{align*}
Define $F(t)=H(q^{\mathrm{int}}(t))-H(q^{\mathrm{bdry}}(t))$. Direct differentiation gives
\begin{align*}
F(0)&=0,\\
F'(t)&=\frac13\log_2\!\left[\frac{4-t^2}{4(1-t^2)}\right]>0,
\qquad 0<t<1.
\end{align*}
The boundary spectrum therefore minimizes the enlarged problem for $0<t<1$; continuity gives the same conclusion at $t=1$, and the case $t=0$ is immediate. Computational-basis inputs attain $\lambda^{\mathrm{bdry}}$, so the same spectrum is the true minimum over channel outputs.

For the curves in Fig.~\ref{fig:qutrit},
\begin{align*}
\CZ(\sigma_t)&=H(q^{\mathrm{bdry}}(t))-S(\sigma_t),\\
\CF(\sigma_t)&=H(q^{\mathrm{int}}(t))-S(\sigma_t),\\
S(\sigma_t)&=H\!\left(\frac{1+2t}{3},\frac{1-t}{3},\frac{1-t}{3}\right).
\end{align*}

\section{Partial transpose of the qutrit Choi state}
\label{app:qutrit-choi}

Let
\begin{equation*}
J_t=(\id\otimes\Lambda_{M_2,\sigma_t})(\proj{\Phi_3}),\qquad
\ket{\Phi_3}=\frac1{\sqrt3}\sum_{j=0}^2\ket{j,j},
\end{equation*}
be the normalized Choi state. A direct evaluation from $U_{M_2}$ shows that, in the ordered product basis $\{\ket{00},\ket{01},\ket{10},\ket{12},\ket{21},\ket{22},\ket{02},\ket{11},\ket{20}\}$,
\begin{equation*}
J_t^{T_B}=\frac{t+2}{18}I_6\oplus B_t,
\end{equation*}
where
\begin{equation*}
B_t=\begin{pmatrix}
(1-t)/9&t/6&t/6\\
t/6&(1-t)/9&t/6\\
t/6&t/6&(1-t)/9
\end{pmatrix}.
\end{equation*}
The symmetric vector $(1,1,1)^T$ has eigenvalue $(1+2t)/9$, while its two-dimensional orthogonal complement has eigenvalue $(2-5t)/18$. This gives the spectrum quoted in Theorem~\ref{thm:qutrit} and proves that $J_t$ is NPT exactly when $t>2/5$.

\end{document}